\theoremstyle{plain}
\newtheorem{theorem}{Theorem}
\newtheorem{proposition}[theorem]{Proposition}
\newtheorem{lemma}[theorem]{Lemma}
\newtheorem{corollary}[theorem]{Corollary}
\theoremstyle{definition}
\theoremstyle{remark}
\newtheorem{remark}[theorem]{Remark}
\DeclareMathOperator{\res}{Res}
\def\Z{\mathbb{Z}}	
\def\C{\mathbb{C}}	
\renewcommand{\leq}{\leqslant} 		
\renewcommand{\geq}{\geqslant}
\newcommand{\nn}{\nonumber}
\def\cC{\mathcal{C}}
\def\cD{\mathcal{D}}
\def\cP{\mathcal{P}}
\begin{document}

\title[Hirota for EBTH]{Hirota equations for the extended bigraded Toda hierarchy and the total descendent potential of $\C P^1$ orbifolds}

\author{Guido Carlet}
\address{Korteweg-de Vries Institute for Mathematics, University of Amsterdam, P.O. Box 94248, 1090 GE Amsterdam, The Netherlands.}
\email{guido.carlet@uva.nl}

\author{Johan van de Leur}
\address{Department of Mathematics, 
University of Utrecht, P.O. Box 80010,
3508 TA Utrecht, 
The Netherlands.}
\email{j.w.vandeleur@uu.nl}



\date{4/4/2013}

\begin{abstract}
We prove that the Hirota quadratic equations of Milanov and Tseng define an integrable hierarchy which is equivalent to the extended bigraded Toda hierarchy. 
In particular this proves a conjecture of Milanov-Tseng that relates the total descendent potential of the orbifold $C_{k,m}$ with a tau function of  the bigraded Toda hierarchy. 
\end{abstract}
\maketitle


\section*{Introduction}
The $(k,m)$-extended bigraded Toda hierarchy (EBTH), where $k$, $m$ are two positive integers, was introduced in~\cite{Car06} as a generalization of the extended Toda hierarchy~\cite{CDZ} with $k+m$ dependent variables.
The main motivation was the attempt to define an integrable hierarchy that would encode the relations between the Gromov-Witten invariants of certain $\C P^1$ orbifolds, in analogy with the fact that the Gromov-Witten potential of $\C P^1$ is actually a tau function of the extended Toda hierarchy~\cite{Ge, DZ04, Mil06}.

Indeed the general problem of associating an explicit integrable hierarchy to the Gromov-Witten theory of a given target space $X$ has been solved only in a small number of examples. These include, beyond the $\C P^1$ case just mentioned, the $X=\mathrm{pt}$ case where the relevant integrable system, according to the Kontsevich-Witten theorem, is the KdV hierarchy, and few other cases mostly related to the equivariant version of the theory under a  complex torus action. In this context, recently some progress has been made in the case of the local Gromov-Witten theory of $\C P^1$, in relation with the Ablowitz-Ladik hierarchy~\cite{BCR}.

In a further important case Milanov and Tseng in~\cite{MT08}, considered the orbifold $C_{k,m}$ obtained from $\C P^1$ by adding two orbifold points of order $k$, $m$ respectively. They proved that the orbifold quantum cohomology ring of $C_{k,m}$ coincides with Frobenius manifold $M_{k,m}$ of degree $(k,m)$ Laurent polynomials~\cite{DZ98}. Moreover they showed that the Givental total descendent potential $D^{M_{k,m}}$ associated with $M_{k,m}$ satisfies an Hirota quadratic equation and conjectured that such equation would be equivalent to the $(k,m)$-extended bigraded Toda hierarchy. 
In this work we indeed show that the Hirota quadratic equations of Milanov and Tseng are equivalent to the extended bigraded Toda hierarchy as formulated in~\cite{Car06}.
Since, up to some details mentioned in~\cite{MT08}, the potential $D^{M_{k,m}}$ coincides with the generating function of Gromov-Witten invariants of $C_{k,m}$, one concludes that the such generating function is a tau function of the $(k,m)$-extended bigraded Toda hierarchy. 

The extended bigraded Toda hierarchy can be thought of as the analogue, in the 2D Toda hierarchy~\cite{UT84} world, of the Gelfand-Dickey reductions of the KP hierarchy. 
However, only the `standard' flows of the EBTH are obtained by restriction of the 2D Toda flows, while the so-called `logarithmic' flows have to be introduced independently, and can be defined only when the discrete space variable is replaced with a continuous one. 

The fact that the logarithmic flows of the EBTH do not originate as restrictions of the 2D Toda flows, points to the existence of a larger hierarchy, which we might call `extended 2D Toda', defined for a continuous space variable, which should include the usual 2D Toda flows and contain extra flows of logarithmic type. At the dispersionless level, such extension has been recently found in~\cite{CM}, as the principal hierarchy associated with the infinite-dimensional Frobenius manifold discovered in~\cite{CDM}. The definition of a suitable dispersive version of such hierarchy is still an open problem.

The presence of the logarithmic flows makes non-trivial to generalize some well-known construction for the Gelfand-Dickey, KP or 2D Toda hierarchies to the extended bigraded Toda.
In the case of the extended Toda hierarchy, the problem of finding Hirota quadratic equations has been originally solved  by Milanov~\cite{Mil07}. 
The main feature of his construction was the use of vertex operators with values in the algebra of formal differential operators in the space variable $x$. From the point of view of the Hirota equations, such operators are needed to cancel the multivaluedness of the logarithms appearing in the vertex operators. 
Recently a more familiar version of the Hirota quadratic equations of ETH was  suggested~\cite{Tak10}, which does not require the algebra of differential operators in $x$. In the last section we consider the equivalence of such different formulations.

The paper is organized as follows: in section 1 we recall some facts from~\cite{MT08}, mainly to fix the notations and state the Milanov-Tseng form of the Hirota equations. In particular we give the definition of the Frobenius manifold $M_{k,m}$ and independently compute the periods which enter in the definition of the vertex operators.  In section 2 we first rewrite the Hirota equations in a more standard form,  then we express them in terms of difference operators. A straightforward analysis then allows us to derive the Sato and Lax equations of the extended bigraded Toda hierarchy. 
In section 3 we show that the Hirota equations are actually equivalent to the Sato equations, namely we show that given a solution of the Sato equations we can construct from it a tau function which satisfies the Hirota equations. 
Finally, in section 4, we comment on alternative formulations of the Hirota equations for EBTH.

\subsection*{Acknowledgements}
The authors would like to thank B.~Bakalov for his insight on the alternative formulation of the HQE appearing in section 4. G.C. would like to acknowledge the Department of Mathematics of Università Milano-Bicocca for the support during the period when this work has been carried out and NWO for a travel grant to Utrecht University.

\section{The Hirota quadratic equations for the total descendent potential of $M_{k,m}$}

In this section we will recall some material from~\cite{MT08}, mainly to fix notations.  We also give a slightly different derivation of the classical limit of the periods appearing in the vertex operators. 

\subsection{The Frobenius manifold $M_{k,m}$}
\label{frob}

Let $M_{k,m}$ be the Frobenius manifold on the space of trigonometric Laurent polynomials of degree $(k,m)$, i.e.
\begin{equation} \label{LGpot}
\lambda (\zeta) = \zeta^k + u_1 \zeta^{k-1}+ \dots + u_{k+m} \zeta^{-m},
\end{equation}
as defined in~\cite{DZ98}. The identification of $T_\lambda M_{k,m}$ with $\C[\zeta,\zeta^{-1}]/(\partial_\zeta \lambda )$ induces on the tangent bundle an associative commutative product. The flat metric is defined by the residue pairing
\begin{equation}
<\partial', \partial''> = \underset{\substack{d\lambda=0 \\ |\lambda|<\infty}}{\res} \frac{\partial' \lambda(\zeta) \partial'' \lambda(\zeta)}{\lambda'(\zeta)}\frac{d\zeta}{\zeta^2} .
\end{equation}
Recall that the flat coordinates are defined in terms of residues as 
\begin{equation}
t^\alpha = -\frac{k}{\alpha} \res_{\zeta=\infty} \lambda(\zeta)^{\frac{\alpha}{k}} \ \frac{d\zeta}{\zeta}, \quad
t^{k+m-\beta} = \frac{m}{\beta} \res_{\zeta=0} \lambda(\zeta)^{\frac{\beta}{m}} \ \frac{d\zeta}{\zeta}, \quad 
\end{equation}
for $\alpha=1,\dots,k-1$, $\beta=1,\dots,m$ and $t^{k+m}= \log (Q^m u_{k+m})$.
The only non-zero entries of the metric in flat coordinates are
\begin{equation}
<\frac{\partial}{\partial t^\alpha},\frac{\partial}{\partial t^{k-\alpha}}>=\frac1k, \quad 
<\frac{\partial}{\partial t^{k+\beta}},\frac{\partial}{\partial t^{k+m-\beta}}>=\frac1m
\end{equation}
for $\alpha=1,\dots,k-1$, $\beta=0,\dots,m$.
The unity and Euler vector fields in flat coordinates are given by $e=\frac{\partial}{\partial t^k}$ and 
\begin{equation}
E=  \sum_{\alpha=1}^k \frac{\alpha}{k} t^\alpha \frac{\partial }{\partial t^\alpha}  + \sum_{\beta=1}^{m-1} \left( 1- \frac{\beta}m \right) t^{k+\beta} \frac{\partial }{\partial t^{k+\beta}} + \left( \frac1k + \frac1m \right) m \frac{\partial }{\partial t^{m+k}} .
\end{equation}

\subsection{Periods}
\label{vertex}
The vertex operators appearing in the Hirota quadratic equations are defined in terms of the ``classical limit'' of certain periods of the superpotential $\lambda(\zeta)$. 

Let $\Delta \subset M_{k,m} \times \C$ be the discriminant, the set of points $(t_0,\lambda_0)$ at which the preimage $\lambda^{-1}(\lambda_0)$ is singular, i.e. is given by less than $k+m$ distinct points. Let $\zeta_a(\lambda)$ denote one of such points. 

The periods $I^{(l)}_a(t,\lambda)$, $l\in\Z$ are multivalued functions on $(M_{k,m} \times \C )\backslash \Delta$ with values in $H:=T M_{k,m}$, defined by
\begin{equation}
\label{period}
<I_a^{(-p)}(\lambda,t),\frac{\partial }{\partial t^\alpha} >  = - \frac{\partial}{\partial t^\alpha} 
\left[ d^{-1} \left( \frac{(\lambda-\lambda(\zeta))^p}{p!} \frac{d\zeta}\zeta \right) \right]_{\zeta=\zeta_a(\lambda)} , \quad p\geq0 .
\end{equation}
Here the formal integration is defined as $d^{-1}( \zeta^s d\zeta)=(s+1)^{-1} \zeta^{s+1}$ for $s\not=-1$ and $d^{-1}(\zeta^{-1}d\zeta)=\log\zeta$.
The relation $\partial_\lambda I^{(p)}_a = I^{(p+1)}_a$, which can be easily verified for negative $p$, serves as a definition for the $p>0$ periods.

The classical limit mentioned above has to be performed as follows: first one considers the asymptotic expansion of the above expression for $\lambda\sim\infty$, and observes that the coefficients in such expansion are polynomials in the flat coordinates\footnote{In this section we set the Novikov parameter $Q$ to $1$, for the sake of simplicity. One can recover the dependence on $Q$ by shifting $t^{k+m}$ by $m \log Q$.}  $t^1, \dots , t^{k+m}$ and in $e^{t^{k+m}/m}$.  Then one sets to zero $t^\alpha$ and $e^{t^{k+m}/m}$, which amounts to taking the constant coefficient of such polynomials.
In contrast with what happens in the $A_{n+1}$ case~\cite{Giv}, note that here ``limit'' does not correspond to an actual limit to a point of the Frobenius manifold, since $t^{k+m}$ and $e^{t^{k+m}}$ cannot evidently be set to zero at the same time. This phenomenon can be traced back to the resonant spectrum of the Frobenius manifold $M_{k,m}$.

The classical limit of the periods $I_a^{(p)}$ has been computed in~\cite{MT08} using the fact that they satisfy certain differential equations. Let us sketch here how to directly obtain such limit from the definition~\eqref{period}.

First observe that in the $\lambda\sim\infty$ the preimages of a point $\lambda$ split in two subsets: we denote $\zeta_a(\lambda)$ with $a=1,\dots,k$ those that tend to $\infty$ and with $\zeta_b(\lambda)$ with $b=k+1,\dots,k+m$, those that tend to $0$. 

Their asymptotic behavior for $\lambda\sim\infty$ defines two Laurent series 
\begin{align}
&\zeta_a(\lambda) \sim \lambda_a^{1/k} +O(1) \in R\lambda_a^{1/k}[[\lambda_a^{-1/k}]] \\
&\zeta_b(\lambda) e^{-t^{k+m}/m} \sim \lambda_b^{-1/m} +O(\lambda_b^{-2/m}) \in R[[\lambda_b^{-1/m}]]  \label{aympt}
\end{align}
with coefficients in $R:=\C[t^1,\dots,t^{k+m},e^{t^{k+m}/m}]$. Moreover, in the classical limit all but the leading terms in the right-hand sides of the previous expressions become zero. The indices $a$, $b$ in $\lambda_a^{1/k}$, $\lambda_b^{1/m}$ enumerate the different branches of the roots.

For $p>0$, moving the derivative $\partial_{t^\alpha}$ inside the integral and recalling $\lambda-\lambda(\zeta_a(\lambda))=0$ by definition, we can write
\begin{equation}
\label{per-1}
<I_a^{(-p)}(\lambda,t),\frac{\partial }{\partial t^\alpha} >  = \left[ d^{-1} \left( \frac{(\lambda-\lambda(\zeta))^{p-1}}{p-1!}\frac{\partial \lambda(\zeta)}{\partial t^\alpha} \frac{d\zeta}\zeta \right)  \right]_{\zeta=\zeta_a(\lambda)} .
\end{equation}
It is easy to prove that $\lambda(\zeta), \partial_{t^\alpha}\lambda(\zeta) \in R[\zeta, \zeta^{-1}]$ and that in the classical limit $\lambda(\zeta) \to \zeta^k$ and
$\frac{\partial \lambda(\zeta)}{\partial t^\alpha}$ gives $\zeta^{k-\alpha}$ for $\alpha=1,\dots,k$, and tends otherwise to zero. It follows that~\eqref{per-1} tends to
\begin{equation}
\label{per-1x}
\left[ d^{-1} \left(  \frac{(\lambda-\zeta^k)^{p-1}}{p-1!}\zeta^{k-\alpha} \frac{d\zeta}\zeta\right)  \right]_{\zeta=\lambda_a^{1/k}}
\end{equation}
for $\alpha=1,\dots,k$, and to zero for $\alpha =k+1,\dots,k+m$. 

The second set of periods corresponds to the preimages $\zeta_b(\lambda)$ that tend to $0$ for $\lambda\sim\infty$. As one can see from the asymptotic expansions~\eqref{aympt} the leading term in $\zeta_b(\lambda)$ vanishes in the classical limit. For this reason it is convenient to change variable to $\tilde\zeta = e^{-t^{k+m}/m} \zeta$ in the integration, writing $<I_b^{(-p)},\frac{\partial}{\partial t^\alpha}>$ as
\begin{equation}
\label{per-2}
\left[ d^{-1} \left( \frac{\lambda-\lambda(\tilde\zeta))^{p-1}}{p-1!} 
\left. \frac{\partial \lambda(\zeta)}{\partial t^\alpha} \right|_{\zeta=e^{t^{k+m}/m} \tilde\zeta} \frac{d\tilde\zeta}{\tilde\zeta} \right) 
\right]_{\tilde\zeta= e^{-t^{k+m}/m} \zeta_b(\lambda)} .
\end{equation}
Note that in the new variable the formal integration rules are $d^{-1}( \tilde\zeta^s d\tilde\zeta)=(s+1)^{-1}\tilde\zeta^{s+1}$ for $s\not=-1$ and $d^{-1}(\tilde\zeta^{-1}d\tilde\zeta)=\log\tilde\zeta + t^{k+m}/m$.
The classical limit of $\lambda(\tilde\zeta)$ is now given by $\tilde\zeta^{-m}$. The derivatives $\frac{\partial\lambda(\zeta)}{\partial t^\alpha}$ evaluated at $\zeta=e^{t^{k+m}/m} \tilde\zeta$ tend to $\tilde\zeta^{k-\alpha}$ for $\alpha=k,\dots,k+m$ and otherwise to $0$. Hence the classical limit of equation~\eqref{per-2} is
\begin{equation}
\label{per-2x}
\left[ d^{-1} \left(  \frac{(\lambda-\tilde\zeta^{-m})^{p-1}}{p-1!}\tilde\zeta^{k-\alpha} \frac{d\tilde\zeta}{\tilde\zeta} \right)  \right]_{\tilde\zeta=\lambda_b^{-1/m}}.
\end{equation}

Finally the integrals~\eqref{per-1x} and~\eqref{per-2x} can be easily computed in explicit form. Their generating functions
\begin{equation} \label{finf}
f_\infty^{a/b} = \sum_{n\in\Z} I^{(n)}_{a/b}(\lambda,\infty) (-z)^n
\end{equation}
are given in the next section.

\subsection{Vertex operators}

The vertex operator $\Gamma$ associated to a vector $f\in H[[z,z^{-1}]]$ is defined as
$\Gamma := e^{\widehat{f_-}} e^{\widehat{f_+}} $,
where $\widehat{f_\pm}$ are linear differential operators obtained by a quantization procedure, as described e.g. in~\cite{Giv2}. Briefly, given
\begin{equation}
f(z) = \sum_{n\in\Z} (I^{(n)})^\alpha \frac{\partial}{\partial t^\alpha} (-z)^n 
\end{equation}
the associated quantized operators are
\begin{equation}
\widehat{f_+} = \sum_{n\geq0} (I^{(n)})^\alpha (-1)^n \epsilon \frac{\partial}{\partial q^\alpha_n} , \quad
\widehat{f_-} = -\sum_{n\geq0} (I^{(-n-1)})_\alpha  \frac{q^\alpha_n}\epsilon .
\end{equation}
Here $\frac{\partial}{\partial t^\alpha}\in H$ denotes the coordinate basis of $H$ and $dt^\alpha$ the dual basis of $T^* M_{k,m}$ identified with $H$ by the metric.  
 

These operators act on the Fock space $B_H$, i.e. the space of formal functions in the variables $q_n^\alpha$ for $n\geq0$, $\alpha=1,\dots,m+k$.

By the explicit computation of the periods in the previous section we obtain two sets of vectors $f^{a/b}_\infty$. 
For $1\leq a \leq k$, we have 
\begin{align}
f_\infty^a = &\frac1k \sum_{n\geq0} \frac{\lambda^n}{n!} (\log_a \lambda - c_n) dt^k (-z)^{-n-1} 
+ \frac1k \sum_{n\geq0} n! \lambda^{-n-1} dt^k z^n \nn\\
&+ \sum_{\alpha=1}^{k-1} \sum_{n\in\Z} 
\big(\frac{\alpha}k-1\big)_n 
\lambda_a^{\alpha/k -n-1} \frac{\partial}{\partial t^\alpha}(-z)^n .
\end{align}
In this formula $\log_a \lambda$ denotes the different branches of the logarithm, parametrized by $a$, i.e. $\log_a \lambda = \log\lambda + 2\pi i (a-1)$ where $\log\lambda$ is a fixed choice of branch of the logarithm near $\lambda=\infty$. Consequently
\begin{equation}
\lambda_a^\frac1k = e^{\frac1k \log_a\lambda} = \lambda^{\frac1k} e^{2\pi i \frac{a-1}k}.
\end{equation}

We denote the harmonic numbers $c_0=0$, $c_n=1+ \cdots + \frac1n$ and for $n\in\Z$ we define the function
\begin{equation}
(q)_n := \frac{\prod_{l=-\infty}^n (q-l+1)}{\prod_{l=-\infty}^0 (q-l+1)}
=\frac{\Gamma(q+1)}{\Gamma(q-n+1)},
\end{equation}
which coincides with the Pochhammer symbol for the falling factorial for $n\geq0$.

For $k+1\leq b \leq k+m$, evaluating~\eqref{per-2x} we get
\begin{align}
f_\infty^b = &-\frac1m \sum_{n\geq0} \frac{\lambda^n}{n!} (\log_b (\lambda Q^{-m} ) -c_n) dt^k (-z)^{-n-1} - \frac1m \sum_{n\geq0} n! \lambda^{-n-1} dt^k z^n \nn\\
&- \sum_{\alpha=1}^m \sum_{n\in\Z}   
\big(\frac{\alpha}m-1\big)_n 
\lambda_b^{\alpha/m -n-1} \frac{\partial}{\partial t^{k+m-\alpha}}(-z)^n,
\end{align}
where the branch of the logarithm is parametrized by $\log_b\lambda=\log\lambda+ 2\pi i (b-k-1)$. 

We denote $\Gamma_\infty^{\pm a}$ and $\Gamma_\infty^{\pm b}$ the vertex operators corresponding to $\pm f_\infty^a$ and $\pm f_\infty^b$, respectively. 

Note that, contrary to what happens e.g. in the $A_n$ case~\cite{Giv}, the vertex operators introduced above depend not only on the roots of $\lambda$ but also on its logarithm.
As a consequence the averaging over the different branches of the roots that appears in the Hirota equation~\eqref{hqe} fails to produce a single-valued function in a neighbourhood of $\lambda\sim \infty$. Indeed the sum
\begin{equation}
\sum_{a=1}^k  \lambda_a^{\frac{1-k}k}  \Gamma_\infty^a\otimes\Gamma_\infty^{-a}
\end{equation}
gets an extra summand proportional to 
\begin{equation} \label{ext}
e^{\widehat{\phi_-}} \otimes e^{-\widehat{\phi_-}}  - 1  
\end{equation}
upon sending $\lambda$  to $\lambda e^{2\pi i}$,  where $\phi_- := 2 \pi i \sum_{n\geq0} \frac{\lambda^n}{n!} dt^k (-z)^{-n-1}$. 

The problem of the logarithmic multivaluedness has been solved by Milanov~\cite{Mil07} by introducing extra vertex operators that take values in the algebra of differential operators acting on an extra variable $x$.

Let
\begin{equation}
\Gamma_\infty^{\delta} = e^{\widehat{f_\infty^\varphi \epsilon\partial_x}} e^{\widehat{\frac{x}{\epsilon} \frac{\partial\ }{\partial t^k}}},
\qquad
\Gamma_\infty^{\delta\#} = e^{\widehat{\frac{x}{\epsilon} \frac{\partial\ }{\partial t^k}}}  e^{-\widehat{f_\infty^\varphi \epsilon\partial_x}},
\end{equation}
where
\begin{equation}
f_\infty^\varphi	= \sum_{n>0} \frac{\lambda^n}{n!} dt^k (-z)^{-n-1} \in H[[z^{-1}]] .
\end{equation}

One can check that $\Gamma_\infty^{\delta\#} \otimes \Gamma_\infty^\delta$ vanishes when composed with the extra term~\eqref{ext}, if $(q_0^k)' - (q_0^k)'' \in \Z$. A similar argument holds for the second summand in the Hirota quadratic equation~\eqref{hqe}. 
This ensures that the $1$-form in $\lambda$ appearing in the Hirota quadratic equations is single valued in a neighbourhood at $\lambda\sim\infty$.

\subsection{Hirota quadratic equations}

We say that an element $\tau$ of the Fock space $B_H$ satisfies the (descendent) Hirota quadratic equation (HQE) iff the $1$-form 
\begin{equation} \label{hqe}
(\Gamma_\infty^{\delta\#} \otimes \Gamma_\infty^\delta) 
\left(\frac1k\sum_{a=1}^k  \lambda_a^{\frac{1-k}k}  \Gamma_\infty^a\otimes\Gamma_\infty^{-a} 
-\frac{Q}m \sum_{b=k+1}^{k+m}  \lambda_b^{-\frac{1+m}m} \Gamma_\infty^b\otimes\Gamma_\infty^{-b} \right) (\tau\otimes\tau) d\lambda
\end{equation}
computed at $(q_0^k)''-(q_0^k)'=\epsilon r$ is regular in $\lambda$ for each $r\in\Z$. The tensor product $\tau\otimes\tau$ denotes the multiplication of two tau functions $\tau(q') \tau(q'')$, evaluated in variables $q'$, $q''$. 
By definition, the HQE is interpreted as follows: first we perform a change of variables $y'= \frac12(q'-q'')$ and $y''= \frac12(q'+q'')$, then we expand the result in power series in $y''$. 
By the argument in the previous section each coefficient in this expansion is a single valued function in a neighbourhood of $\lambda\sim\infty$ hence expands as a Laurent series in $\lambda^{-1}$. The requirement of regularity means that all strictly negative powers of $\lambda$ are set to zero. 

\begin{remark}
To the (calibrated) Frobenius manifold $M_{k,m}$ one associates, using Givental formula~\cite{Giv2}, an element $\cD^{M_{k,m}}$ of the Fock space $B_H$, called total descendent potential, in terms of the action of certain quantized operators on $k+m$ copies of the Kontsevich-Witten KdV tau-function. According to~\cite{MT08}, $\cD^{M_{k,m}}$ should be easily shown to coincide with the generating function of the Gromov-Witten invariants of $C_{k,m}$.

Milanov and Tseng proved in~\cite{MT08} that the total descendent potential $\cD^{M_{k,m}}$ satisfies the HQE~\eqref{hqe}. They conjectured that the HQE~\eqref{hqe} should be equivalent to the EBTH, as we prove in the following.

Note that this result follows from a similar theorem that asserts that the total ancestor potential, depending on a point $t$ of the Frobenius manifold, satisfies a $t$-dependent ancestor Hirota equation. The descendent potential is related to the ancestor potential by a lower-triangular $S$ action of the Givental group, hence the descendent Hirota equation is obtained by conjugation of the vertex operators by $S$, which is equivalent to performing a ``classical limit'' in $t$. This explains the need for the ``classical limit'' as shown above in~\ref{vertex}. 

The ancestor Hirota equations are proved by showing the regularity of the bilinear equations at the critical values $\lambda\sim u_i$ of $\lambda(\zeta)$, which, together with the property of invariance under the monodromy group of $M_{k,m}$, implies regularity at $\lambda\sim\infty$. 
\end{remark}

\section{From Hirota to Lax formulation of extended bigraded Toda hierarchy}

Let us first spell out the HQE~\eqref{hqe}. 

\subsection{Bilinear identity for the wave functions}

Let us define the following power series in $\zeta^{-1}$
\begin{equation} \label{p-tau}
\cP_1(\zeta) = \frac1\tau	
\exp\left({\frac{m}k \epsilon  \sum_{n\geq0} n! \zeta^{-nk-k}\frac{\partial\ }{\partial q_n^{k+m}} 
+ \epsilon \sum_{n\geq0} \sum_{\alpha\geq1}^{k-1} \big( n-\frac{\alpha}k\big)_n \zeta^{\alpha-nk-k} \frac{\partial\ }{\partial q_n^\alpha}}
\right)\tau
\end{equation}
such that
\begin{equation} \label{dd1}
\cP_1(\lambda_a^{1/k})  = \frac{e^{\widehat{(f^a_\infty)_+}}\tau}{\tau} 
\end{equation}
 for $1\leq a \leq k$.

Using the quantization procedure to express the vertex operators in terms of differential operators on the variables $q$, and substituting the previous definition, we get
\begin{align}
&\Gamma_\infty^{\delta\#} \Gamma_\infty^a \tau  = 
\exp ({x\frac{\partial}{\partial q_0^k}}) \cdot
  \exp \Big({\sum_{n>0} \frac{\lambda^n}{n!} q_n^k \partial_x}  \Big)\cdot \nn \\
&\cdot\exp \Big(  -\frac1{\epsilon k} \sum_{n\geq0} \big( 
\sum_{\alpha=1}^{k-1} \big( \frac{\alpha}k -1 \big)_{-n-1}
\lambda_a^{\alpha/k +n} q_n^{k-\alpha} + \frac{\lambda^n}{n!} ( \log \lambda_a - c_n) q_n^k \big) \Big) \cdot \nn\\
&\cdot \tau \cP_1(\lambda_a^{1/k}) 
\end{align}
Now we perform a careful commutation of the terms in this expression that will allow us to remove the logarithmic term. First, since in the second and third line there is no dependence on the variable $x$, we can move the second exponential to the end of the third line. 
Then we act with $\exp({x\frac{\partial}{\partial q_0^k}})$ on the quantities appearing on its right. That amounts to inserting the $x$ dependence in $\tau$ (and in $\cP_1$, since it also depends on $\tau$) and to the multiplication by a factor $\lambda_a^{-\frac{x}{\epsilon k}}$.

Denote by a prime the $x$ dependent quantities obtained by shifting $q_0^k$ by $x$, e.g.
\begin{equation}
\tau' = \tau|_{q_0^k\to q_0^k +x}, \quad \cP_1' = \cP_1|_{q_0^k\to q_0^k +x}, \quad \text{etc...}
\end{equation}

The previous expression is now equal to
\begin{align}
&\tau' \cP_1'(\lambda_a^{1/k}) 
\cdot\exp \Big(-  \frac1{\epsilon k} \sum_{n\geq0}
\sum_{\alpha=1}^{k-1}\big( \frac{\alpha}k -1 \big)_{-n-1}\lambda_a^{\alpha/k +n} q_n^{k-\alpha}\Big)\cdot\\
& \cdot \exp \Big(- \frac1{\epsilon k} \sum_{n>0} \frac{\lambda^n}{n!} ( \log \lambda_a - c_n) q_n^k  \Big) \cdot \\
&\cdot \lambda_a^{-\frac{q_0^k+x}{\epsilon k}}
\cdot\exp \Big({\sum_{n>0} \frac{\lambda^n}{n!} q_n^k \partial_x}  \Big)
\end{align}

Commuting the two terms in the last line exactly cancels the logarithm that appears in the second line. We have shown that 
\begin{align}
\Gamma_\infty^{\delta\#} \Gamma_\infty^a \tau  =
&\tau' \cP_1'(\lambda_a^{1/k}) \cdot
\exp \Big(-  \frac1{\epsilon k} \sum_{n\geq0}
\sum_{\alpha=1}^{k-1} \big( \frac{\alpha}k -1 \big)_{-n-1} \lambda_a^{\alpha/k +n} q_n^{k-\alpha}\Big)\cdot \nn\\
&\cdot \exp \Big(\frac1{\epsilon} \sum_{n>0} \frac{\lambda^n}{n!} ( \epsilon \partial_x + \frac1k c_n) q_n^k \big) \Big) \cdot
\lambda_a^{-\frac{q_0^k+x}{\epsilon k}} 
\end{align}

Defining the $\partial_x$-operator-valued wave function as
\def\cW{\mathcal{W}}
\begin{equation}
\label{cW1}
\cW_1(\zeta) = \cP_1'(\zeta) e^{- \frac1{\epsilon k} \sum_{n\geq0}
\sum_{\alpha=1}^{k-1} \big( \frac{\alpha}k -1 \big)_{-n-1} 
\zeta^{\alpha +nk} q_n^{k-\alpha} +
\frac1{\epsilon} \sum_{n>0} \frac{\zeta^{nk}}{n!} ( \epsilon \partial_x + \frac1k c_n) q_n^k  }
\end{equation}
we have
\begin{equation} \label{t1}
\Gamma_\infty^{\delta\#} \Gamma_\infty^a \tau  =
\tau' \cW_1(\lambda_a^{1/k}) \lambda_a^{-\frac{q_0^k+x}{\epsilon k}} .
\end{equation}

Note that the introduction (originally done in~\cite{Mil07}) of an differential operator-valued wave function allows to isolate the logarithmic dependence, that appears in the vertex operators, only on the last factor in~\eqref{t1}. Such factor will actually cancel in the Hirota quadratic equation. 

In a similar way we can prove that
\begin{align}
&\Gamma_\infty^\delta \Gamma_\infty^{-a} \tau = \lambda_a^{\frac{q_0^k+x}{\epsilon k}} \cW_1^* (\lambda_a^{1/k}) \tau',\\
&\Gamma_\infty^{\delta\#} \Gamma_\infty^b \tau = \tau' \cW_2(\lambda_b^{1/m}) (\lambda_b Q^{-m}) ^{\frac{q_0^k+x}{\epsilon m}},  \\
&\Gamma_\infty^{\delta} \Gamma_\infty^{-b} \tau = (\lambda_b Q^{-m}) ^{-\frac{q_0^k+x}{\epsilon m}} \cW^*_2(\lambda_b^{1/m}) \tau' .
\end{align}
where
\begin{align}
&\cW_1^{*}(\zeta) = e^{\frac1{\epsilon k} \sum_{n\geq0}
\sum_{\alpha=1}^{k-1} \big( \frac{\alpha}k -1 \big)_{-n-1} 
\zeta^{\alpha +nk} q_n^{k-\alpha} -
\frac1{\epsilon} \sum_{n>0} \frac{\zeta^{nk}}{n!} ( \epsilon \partial_x + \frac1k c_n) q_n^k  } {\cP_1^*}' (\zeta) ,\nn\\
&\cW_2(\zeta) = \cP_2'(\zeta) e^{ \frac1{\epsilon m} \sum_{n\geq0}
\sum_{\alpha=1}^{m} \big( \frac{\alpha}m -1 \big)_{-n-1} 
\zeta^{\alpha +nm} q_n^{k+\alpha} +
\frac1{\epsilon} \sum_{n>0} \frac{\zeta^{nm}}{n!} ( \epsilon \partial_x - \frac1m c_n) q_n^k  } ,\nn\\
&\cW_2^{*} (\zeta) = e^{ -\frac1{\epsilon m} \sum_{n\geq0}
\sum_{\alpha=1}^{m} \big( \frac{\alpha}m -1 \big)_{-n-1} 
\zeta^{\alpha +nm} q_n^{k+\alpha} -
\frac1{\epsilon} \sum_{n>0} \frac{\zeta^{nm}}{n!} ( \epsilon \partial_x - \frac1m c_n) q_n^k  } {\cP^*_2}'(\zeta) .
\label{cWrest}
\end{align}

The remaining symbols of the dressing operators are define as
\begin{subequations}
\begin{align}
&\cP_1^*(\zeta) = \frac1\tau	
\exp\left({-\frac{m}k \epsilon  \sum_{n\geq0} n! \zeta^{-nk-k}\frac{\partial\ }{\partial q_n^{k+m}} 
- \epsilon \sum_{n\geq0} \sum_{\alpha\geq1}^{k-1} \big( n-\frac{\alpha}k\big)_n \zeta^{\alpha-nk-k} \frac{\partial\ }{\partial q_n^\alpha}}
\right)\tau , \\
&\cP_2(\zeta) =  \frac1\tau	
\exp\left({- \epsilon  \sum_{n\geq0} n! \zeta^{-nm-m}\frac{\partial\ }{\partial q_n^{k+m}} 
- \epsilon \sum_{n\geq0} \sum_{\alpha\geq1}^{m} \big(n-\frac{\alpha}m\big)_n \zeta^{\alpha-nm-m}  \frac{\partial\ }{\partial q_n^{k+m-\alpha}}}
\right)\tau , \label{p-tau2}
\\
&\cP_2^*(\zeta) =  \frac1\tau	
\exp\left({\epsilon  \sum_{n\geq0} n! \zeta^{-nm-m}\frac{\partial\ }{\partial q_n^{k+m}} 
+ \epsilon \sum_{n\geq0} \sum_{\alpha\geq1}^{m} \big(n-\frac{\alpha}m\big)_n \zeta^{\alpha-nm-m}  \frac{\partial\ }{\partial q_n^{k+m-\alpha}}}
\right)\tau 
\end{align}
\end{subequations}
in such a way that the following expressions, analogous to~\eqref{dd1}, hold
\begin{align}
&\cP_1^*(\lambda_a^{1/k}) = \frac{e^{-\widehat{(f^a_\infty)_+}}\tau}{\tau} ,
\\
&\cP_2(\lambda_b^{1/m}) = \frac{e^{\widehat{(f^b_\infty)_+}}\tau}{\tau} ,
 \\ 
&\cP_2^*(\lambda_b^{1/m}) =\frac{e^{-\widehat{(f^b_\infty)_+}}\tau}{\tau}.
\end{align}

Substituting this in the HQE~\eqref{hqe} and multiplying on the left by $\tau'(q')^{-1}$, and on the right by $\tau'(q'')^{-1}$, we obtain the following equivalent HQE
\begin{align}
&\frac1k \sum_{a=1}^k \lambda_a^{(1-k)/k} \cW_1(\lambda_a^{1/k}) \lambda_a^{-\frac{(q_0^k)'  +x}{\epsilon k}} \cdot \lambda_a^{\frac{(q_0^k)'' +x}{\epsilon k}} \cW_1^* (\lambda_a^{1/k}) d\lambda - \nn\\
-\frac{Q}m &\sum_{b=k+1}^{k+m} \lambda_b^{-\frac{1+m}m} \cW_2(\lambda_b^{\frac1m}) (\lambda_b Q^{-m})^{\frac{(q_0^k)' +x}{\epsilon m}} (\lambda_b Q^{-m})^{-\frac{(q_0^k)'' +x}{\epsilon m}}
\cW^*_2(\lambda_b^{\frac1m})
d\lambda \nn .
\end{align}
In this formula $\cW_1$, $\cW_2$ are evaluated in the variables $q'$, while $\cW_1^*$, $\cW_2^*$ are evaluated in the variables $q''$.

Recalling moreover that $(q_0^k)'' -(q_0^k)' = \epsilon r$ for $r\in \Z$, the previous expression becomes
\begin{align} \label{eq36}
\frac1k \sum_{a=1}^k \cW_1(\lambda_a^{1/k})  \cW_1^* (\lambda_a^{1/k}) \lambda_a^{(r-k+1)/k}  d\lambda - \notag\\
-\frac{Q^{r+1}}m \sum_{b=k+1}^{k+m} 
\cW_2(\lambda_b^{\frac1m}) \cW^*_2(\lambda_b^{\frac1m}) 
\lambda_b^{(-m-r-1)/m} d\lambda .
\end{align}

The two lines in this expression are formal series in $\lambda_a^{\frac1k}$ and $\lambda_b^{\frac1m}$, respectively. The averages over the $k$-th and $m$-th roots of unity, respectively, ensure that the non integer roots cancel, i.e. it is a formal series in integer powers of $\lambda$. 

By a change of variable, we can easily show that the regularity of $1$-form~\eqref{eq36} is equivalent to the following residue formula
\begin{align} \label{hbe-wave}
\res_\zeta \cW_1(\zeta) \cW_1^*(\zeta) \zeta^{ks+r} {d\zeta}=
Q^{r+1} 
\res_\zeta \cW_2(\zeta) \cW_2^*(\zeta) \zeta^{ms-r-2} {d\zeta}
\end{align}
for each $s\geq0$. As above, $\cW_1$, $\cW_2$ are evaluated in the variables $q'$, while $\cW_1^*$, $\cW_2^*$ are evaluated in the variables $q''$, and $(q_0^k)'' -(q_0^k)' = \epsilon r$ for $r\in \Z$.

\subsection{Difference operators}
\label{s:difference}

This bilinear expression can be reformulated in terms of difference operators obtained by ``quantizing'' the symbols $\cW_i$. 
Let $A = \sum_s a_s \Lambda^s$ be a difference operator, where the coefficients $a_s$ are functions of $x$ and the shift operator $\Lambda^s$ acts as $\Lambda^s a(x) = a(x+\epsilon s) \Lambda^s$. The left  and right symbols of $A$ are formal functions of $\zeta$ defined as 
\begin{equation}
\sigma_l(A)=\sum_s a_s\zeta^s, \quad
\sigma_r(A)=\sum_s \tilde{a}_s \zeta^s
\end{equation}
where the coefficients $\tilde{a}_s(x)=a_s(x-\epsilon s)$ are such that $A=\sum_s \Lambda^s \tilde{a}_s$. 
Note that we will deal with symbols whose coefficients are differential operators in $x$.

To reconstruct from~\eqref{hbe-wave} a bilinear expression in terms of difference operators we need to decide which expressions are associated to left and right symbols. In particular we have to take care of the fact that the variable $x$ in $\cW_i$ is shifted by $\epsilon r$ and the same constant $r$ appears in the exponent of $\zeta$ in the residues with opposite sign.

Let us define difference wave-operators $W_i$ and $W_i^*$ such that 
\begin{subequations}
\label{wavesym}
\begin{align}
&\sigma_l (W_1) = \cW_1(\zeta^{-1}), & &\sigma_r(W_1^*) = \cW_1^*(\zeta^{-1}), \\
&\sigma_l(W_2) =\cW_2(Q \zeta), & &\sigma_r(W_2^*) = \cW_2^*(Q\zeta).
\end{align} 
\end{subequations}

Let us consider the left-hand side of~\eqref{hbe-wave}. 
Changing the integration variable to $\zeta^{-1}$ and 
substituting the definitions of the wave-operators we get
\begin{equation}
\res_\zeta \sigma_l(W_1\Lambda^{-ks-1}) \sigma_r(W_1^* \Lambda^{-r}) \frac{d\zeta}\zeta .
\end{equation}
Note that we have used the obvious identities $\sigma_l(A \Lambda^r) = \sigma_l(A) \zeta^r$ and $\sigma_r(A \Lambda^r) = \sigma_r(A(x-\epsilon r)) \zeta^r$ valid for any difference operator $A$.

Defining $\res_\Lambda A = a_0$ for $A=\sum_s a_s \Lambda^s$, one can easily check that
\begin{equation}
\res_\zeta \sigma_l(A) \sigma_r(B) \frac{d\zeta}\zeta = \res_\Lambda AB
\end{equation}
for any two difference operators $A$ and $B$.

The left-hand side in~\eqref{hbe-wave} is then equal to
\begin{equation}
\res_\Lambda W_1 \Lambda^{-ks-1} W_1^* \Lambda^{-r}
\end{equation}
and a similar computation gives that the right-hand side (after rescaling $\zeta \to Q \zeta$) is equal to 
\begin{equation}
\res_\Lambda Q^{ms} W_2 \Lambda^{ms-1} W_2^* \Lambda^{-r} .
\end{equation}
Since these expressions have to be equal for any value of $r\in \Z$, we  obtain that the bilinear equation~\eqref{hbe-wave} is equivalent to the following identity of difference operators for $s\geq0$
\begin{equation} \label{hbe-oper}
W_1(q',x) \Lambda^{-ks -1}W_1^*(q'',x) = Q W_2(q',x) (Q \Lambda)^{ms-1} W_2^*(q'',x).
\end{equation}
Note that in this expression we set $(q_0^k)' = (q_0^k)''$ and that the coefficients expand as differential operators in $x$.

\subsection{Sato-Wilson and Lax equations}

Now we examine some consequences of the last equation.
First observe that by definition the operators $W_i$ and $W_i^*$ are of the following form
\begin{subequations}
\label{WWWW}
\begin{align}
W_1 &= P_1 e^{- \frac1{\epsilon k} \sum_{n\geq0}
\sum_{\alpha=1}^{k-1} \big( \frac{\alpha}k -1 \big)_{-n-1} 
\Lambda^{-\alpha -nk} q_n^{k-\alpha}
 +
\frac1{\epsilon} \sum_{n>0} \frac{\Lambda^{-nk}}{n!} ( \epsilon \partial_x + \frac1k c_n) q_n^k 
 }
\\
W_2 &= P_2 
e^{ \frac1{\epsilon m} \sum_{n\geq0}
\sum_{\alpha=1}^{m} \big( \frac{\alpha}m -1 \big)_{-n-1} 
(Q \Lambda)^{\alpha +nm} q_n^{k+\alpha}
 +
\frac1{\epsilon} \sum_{n>0} \frac{(Q\Lambda)^{nm}}{n!} ( \epsilon \partial_x - \frac1m c_n) q_n^k 
 }\\
W_1^* &=  e^{\frac1{\epsilon k} \sum_{n\geq0}
\sum_{\alpha=1}^{k-1} \big( \frac{\alpha}k -1 \big)_{-n-1} 
\Lambda^{-\alpha -nk} q_n^{k-\alpha}-\frac1{\epsilon} \sum_{n>0} \frac{\Lambda^{-nk}}{n!} ( \epsilon \partial_x +\frac1k c_n) q_n^k }
P_1^*
\\
W_2^{*}& =e^{ -\frac1{\epsilon m} \sum_{n\geq0}
\sum_{\alpha=1}^{m} \big( \frac{\alpha}m -1 \big)_{-n-1} 
(Q\Lambda)^{\alpha +nm} q_n^{k+\alpha}-\frac1{\epsilon} \sum_{n>0} \frac{(Q\Lambda)^{nm}}{n!} ( \epsilon \partial_x - \frac1m c_n) q_n^k }
P^*_2
\end{align}
\end{subequations}
where $P_i$ and $P_i^*$ are difference operators given by
\begin{align}
&\sigma_l(P_1) =\cP_1'(\zeta^{-1}), 
& &\sigma_r(P_1^*) = {\cP_1^*}'(\zeta^{-1}) \\
&\sigma_l(P_2) = \cP_2'(Q \zeta), 
& &\sigma_r(P_2^*) ={ \cP_2^*}'(Q\zeta).
\end{align}
In particular the operators $P_i$ are of the form
\begin{subequations}
\label{PP}
\begin{align}
&P_1 = 1+ p_{1} \Lambda + \dots \\
&P_2 = p_0 + p_{-1} \Lambda^{-1} + \dots ,
\end{align}
\end{subequations}
where the coefficients $p_i$ are expressed in terms of the tau function by the definition~\eqref{p-tau} and~\eqref{p-tau2}. Similar formulas hold for $P_i^*$.
Note that the exponentials in $W_1^*$, $W_2^*$ are the inverses of the exponentials appearing in $W_1$, $W_2$, respectively.

Evaluating the bilinear identity for the wave operators~\eqref{hbe-oper} at different values of $s$ and setting $q''=q'$ we can obtain the Sato-Wilson equation and the Lax formulation of the hierarchy. 

Let $s=0$ and $q'=q''$. From~\eqref{hbe-oper} we get
\begin{equation}
P_1 \Lambda^{-1}	P_1^* = P_2  \Lambda^{-1} P_2^* .
\end{equation}
It easily follows that $P_1^*(x-\epsilon)$ and $P_2^*(x-\epsilon)$ are the inverse operators to $P_1$ and $P_2$, respectively.

Using this fact and letting $s=1$ and $q'=q''$ we obtain
\begin{equation}
\label{L=L}
P_1 \Lambda^{-k} P_1^{-1} = P_2   (Q\Lambda)^{m}P_2^{-1} =: L .
\end{equation}
Clearly the operator $L$ is of the form 
\begin{equation} \label{Laxop}
L = \Lambda^{-k} + \dots + u_{m} \Lambda^{m} .
\end{equation}

Finally, differentiate (\ref{hbe-oper}) to $q_n^{k-\alpha}$ for $1\le \alpha<k$, respectively, to $q_n^{k+\alpha}$ for $1\le \alpha\le m$ and to $q_n^k$, this gives, using the above considerations
\begin{align}
&\frac{\partial P_1 }{\partial  q_n^{k-\alpha}}	P_1^{-1} 
-\frac{1}{\epsilon k}(\frac{\alpha}{k}-1)_{-n-1}P_1\Lambda^{-\alpha-nk}P_1^{-1} 
=\frac{\partial P_2  }{\partial  q_n^{k-\alpha}} P_2^{-1},\label{L58}
\\
&\frac{\partial P_1 }{\partial  q_n^{k+\alpha}}	P_1^{-1} 
=\frac{\partial P_2  }{\partial  q_n^{k+\alpha}}
 P_2^{-1}
 +\frac{1}{\epsilon m}(\frac{\alpha}{m}-1)_{-n-1}
P_2  
(Q\Lambda)^{\alpha+nm}P_2^{-1},\label{L59}\\
&\frac{\partial P_1 }{\partial  q_n^{k}}P_1^{-1} 
+\frac{1}{\epsilon n!}P_1(\epsilon\partial_x+\frac1{k}c_n)\Lambda^{-nk}
P_1^{-1} =\nn\\
&\qquad\qquad=\frac{\partial P_2  }{\partial  q_n^{k}} P_2^{-1}
+\frac{1}{\epsilon n!}P_2 (\epsilon\partial_x-\frac1{m}c_n)(Q\Lambda)^{nm}P_2^{-1}.\label{L60}
\end{align}
Projecting on the positive ($>0$), respectively non-positive ($\le 0$), degrees of $\Lambda$, one obtains the following Sato-Wilson equations
\begin{equation}
\label{Sato-Wilson}
\frac{\partial P_1 }{\partial  q_n^{\beta}}	P_1^{-1} 
=\left(B_n^{\beta}\right)_{>0},\quad
\frac{\partial P_2  }{\partial  q_n^{\beta}} P_2^{-1}
=
-\left(B_n^{\beta}\right)_{\le 0},
\end{equation}
where $B_n^\beta$ is defined by
\begin{align}
B_n^\beta&=\frac{1}{\epsilon k}(-\frac{\beta}{k})_{-n-1}L^{n-\frac{\beta}k+1} \qquad\qquad\quad1\leq\beta\leq k-1, \label{Bs1}\\
B_n^\beta&=\frac{1}{\epsilon m}(\frac{\beta-k}{m}-1)_{-n-1}L^{\frac{\beta-k}m+n} \qquad k+1\leq\beta\leq k+m, \label{Bs2}\\
B_n^{k}&=\frac{2}{\epsilon n!}\left(L^n\left(
\log L -\frac{c_n}2 (\frac1k+\frac1m)-\frac12 \log Q\right)\right).
\label{Bs3}
\end{align}

We have defined the following operators as  in~\cite{Car06}: the roots of the Lax operator $L$
\[
L^{\frac1k}=P_1\Lambda^{-1} P_1^{-1} ,\quad
L^{\frac1m}=P_2 (Q\Lambda)P_2^{-1}
\]
and its logarithm
\[
\log L=\frac1{2m}\log_{+} L+\frac1{2k}\log_{-} L,
\]
where
\begin{align}
\log_{-} L&=-kP_1\epsilon\partial_x P_1^{-1} ,\\
\log_{+} L&=mP_2 \epsilon\partial_xP_2^{-1}+m\log Q.
\end{align}

We see that starting from the Hirota quadratic equation, we are naturally led to the definition of logarithms of $L$. They are difference operators which contain the operator of derivation in $x$. 
See~\cite{CDZ, Car06} for further discussion of their properties. 

The Lax equations are easily derived from the Sato-Wilson equations:
\begin{equation} \label{laxxx}
\frac{\partial L}{\partial  q_n^{\beta}}
=-\left[\left(B_n^{\beta}\right)_{\le 0},L\right]=\left[\left(B_n^{\beta}\right)_{>0},L\right] .
\end{equation}
These Lax equations, together with the form of $L$ given in~\eqref{Laxop} and the definitions of its roots and logarithms, defines a hierarchy called (extended) bigraded Toda hierarchy, which has been introduced in~\cite{Car06}. We have shown that 
\begin{proposition}
The Lax operator $L$ associated to a tau function which satisfies the Hirota quadratic equation~\eqref{hqe} is a solution of the $(k,m)$-extended bigraded Toda hierarchy Lax equations~\eqref{laxxx}. 
\end{proposition}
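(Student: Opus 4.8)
The plan is to run the calculation that the preceding subsections have already set up essentially to completion, and to check that what comes out is exactly the bigraded Toda hierarchy of~\cite{Car06}. Concretely, I would proceed as follows. First, I would take the bilinear identity for the difference wave operators~\eqref{hbe-oper} as the starting point, since the passage from the HQE~\eqref{hqe} to~\eqref{hbe-oper} has already been carried out. The operators $W_i$, $W_i^*$ have the explicit exponential-times-$P_i$ form~\eqref{WWWW}, with $P_i$, $P_i^*$ of the triangular shape~\eqref{PP}. I would then specialize~\eqref{hbe-oper} at $q''=q'$ and at small values of $s$: the case $s=0$ gives $P_1\Lambda^{-1}P_1^*=P_2\Lambda^{-1}P_2^*$, which (comparing leading terms and using the normalizations $P_1=1+\dots$, $P_2=p_0+\dots$) forces $P_i^*(x-\epsilon)=P_i^{-1}$; the case $s=1$ then gives the two dressing expressions for a single operator $L$ as in~\eqref{L=L}, and the form~\eqref{Laxop} of $L$ follows from~\eqref{PP}.

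Next I would differentiate~\eqref{hbe-oper} with respect to each time variable $q_n^\beta$, set $q''=q'$, and use $P_i^*(x-\epsilon)=P_i^{-1}$ to collapse the right factors. This yields the three families of operator identities~\eqref{L58}--\eqref{L60}, each of the schematic form $(\partial_{q_n^\beta}P_1)P_1^{-1}+(\text{explicit }L\text{-expression}) = (\partial_{q_n^\beta}P_2)P_2^{-1}$, where the ``explicit $L$-expression'' is obtained by conjugating the $\Lambda$-monomial (or $\epsilon\partial_x$-term) from the exponential prefactor by $P_1$, and recognizing $P_1\Lambda^{-(\alpha+nk)}P_1^{-1}=L^{(\alpha+nk)/k}$, $P_2(Q\Lambda)^{\alpha+nm}P_2^{-1}=L^{(\alpha+nm)/m}$, $P_i\epsilon\partial_x P_i^{-1}=$ the logarithmic operators. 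Collecting the Pochhammer prefactors $(\tfrac\alpha k-1)_{-n-1}$ etc.\ into the coefficients of~\eqref{Bs1}--\eqref{Bs3} identifies the combined operator as $B_n^\beta$. Since $(\partial_{q_n^\beta}P_1)P_1^{-1}$ contains only strictly positive powers of $\Lambda$ and $(\partial_{q_n^\beta}P_2)P_2^{-1}$ only non-positive powers (immediate from~\eqref{PP}), projecting the identity onto its $\Lambda_{>0}$ and $\Lambda_{\le0}$ parts yields the Sato--Wilson equations~\eqref{Sato-Wilson}. The Lax equations~\eqref{laxxx} then follow by the standard computation: differentiate $L=P_1\Lambda^{-k}P_1^{-1}$, substitute $\partial_{q_n^\beta}P_1=(B_n^\beta)_{>0}P_1$, and use $[B_n^\beta,L]=0$ to rewrite $[(B_n^\beta)_{>0},L]=-[(B_n^\beta)_{\le0},L]$; the same works from the $P_2$ side, and consistency of the two is guaranteed because both came from the single identity~\eqref{hbe-oper}. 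Finally one checks that the resulting data — $L$ of the form~\eqref{Laxop}, the roots $L^{1/k}$, $L^{1/m}$, the logarithm $\log L=\tfrac1{2m}\log_+L+\tfrac1{2k}\log_-L$, and the flows~\eqref{laxxx} with $B_n^\beta$ as in~\eqref{Bs1}--\eqref{Bs3} — matches verbatim the definition of the $(k,m)$-EBTH in~\cite{Car06}.

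The main obstacle, and the step deserving the most care, is the bookkeeping in the differentiation step: making sure the Pochhammer-symbol coefficients, the shifts $x\to x-\epsilon r$ and the opposite-sign powers of $\zeta$ noted in Section~\ref{s:difference}, and the harmonic-number $c_n$ corrections in the $q_n^k$-flow assemble into precisely the operators $B_n^\beta$ of~\eqref{Bs1}--\eqref{Bs3}, in particular reproducing the factor $\tfrac12(\tfrac1k+\tfrac1m)c_n+\tfrac12\log Q$ and the coefficient $2/(\epsilon n!)$ in $B_n^k$. The logarithmic flow is where the differential-operator-valued wave function genuinely earns its keep: the term $P_i\epsilon\partial_x P_i^{-1}$ produced by differentiating the $\epsilon\partial_x$-part of the exponential in~\eqref{WWWW} is exactly what gives $\log_\pm L$, so one must be careful that the $>0$ / $\le0$ splitting is applied to an operator that already contains $\epsilon\partial_x$, and that the two dressing formulas for $\log L$ agree — this agreement is again forced by~\eqref{L=L} together with the identity $\log_+L-\log_-L$ being determined consistently, exactly as in~\cite{CDZ,Car06}. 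Everything else is routine manipulation of difference and differential operators.
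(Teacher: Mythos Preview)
Your proposal is correct and follows essentially the same approach as the paper: the proposition is stated at the end of \S2.3 as a summary of the derivation carried out there, which proceeds exactly as you outline --- specialize~\eqref{hbe-oper} at $q'=q''$, $s=0,1$ to obtain $P_i^*(x-\epsilon)=P_i^{-1}$ and the definition~\eqref{L=L} of $L$, differentiate in each $q_n^\beta$ and set $q'=q''$ to get~\eqref{L58}--\eqref{L60}, project onto $\Lambda_{>0}$ and $\Lambda_{\le0}$ to obtain the Sato--Wilson equations~\eqref{Sato-Wilson}, and then read off the Lax equations~\eqref{laxxx}. Your proposal is, if anything, more explicit than the paper about the mechanism behind the last step and about why the two commutator forms in~\eqref{laxxx} agree.
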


As a corollary, it follows from Milanov-Tseng theorem~\cite{MT08} that the total descendent potential of $\cC_{k,m}$ is a tau function of this hierarchy. 

\begin{remark}
The bigraded Toda hierarchy has been defined in~\cite{Car06} in a slightly different way. The two formulations are identified by changing $\epsilon\to-\epsilon$, and correspondingly $\Lambda\to\Lambda^{-1}$, and by rescaling the times $q_n^\beta \to s_\beta q_n^\beta$, with constants $s_\beta$ not dependent on $n$.
\end{remark}

\section{From Lax to Hirota}
We have shown how to derive the Lax formulation of the bigraded Toda hierarchy from the Hirota quadratic equations. 
The Hirota  equations are actually equivalent to the Lax formulation, or rather to the Sato-Wilson equations, of the EBTH. In this section we will briefly recall the construction of the tau function starting from a solution of the EBTH, and sketch the proof that the Hirota quadratic equations are satisfied. 

We say that two operators $P_1$ and $P_2$ of the form~\eqref{PP} are dressing operators for the EBTH hierarchy if they satisfy the Sato-Wilson equations~\eqref{Sato-Wilson} and the constraint~\eqref{L=L}. 
The corresponding wave functions $\cW_i$, $\cW^*_i$ are defined as the symbols~\eqref{wavesym} of the wave operators~\eqref{WWWW}.

\begin{proposition}
The operators $P_1$, $P_2$ are dressing operators for the EBTH hierarchy if and only if the corresponding wave functions satisfy the bilinear equation~\eqref{hbe-wave}.
\end{proposition}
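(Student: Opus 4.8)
The plan is to establish the equivalence in two directions, using the translation between difference operators and their left/right symbols set up in section~\ref{s:difference}. The forward direction (dressing operators $\Rightarrow$ bilinear identity) was essentially carried out already in the previous section: starting from the Sato-Wilson equations~\eqref{Sato-Wilson} together with the constraint~\eqref{L=L}, one shows that the difference operator identity~\eqref{hbe-oper} holds for all $s\geq 0$ and all $r\in\Z$, and then reads off~\eqref{hbe-wave} by applying $\sigma_l$ and $\sigma_r$ and using $\res_\zeta \sigma_l(A)\sigma_r(B)\frac{d\zeta}\zeta=\res_\Lambda AB$. So the bulk of the work is the converse.

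For the converse, I would assume~\eqref{hbe-wave} and aim to recover, in order: (i) that $P_1^*(x-\epsilon)$, $P_2^*(x-\epsilon)$ are the formal inverses of $P_1$, $P_2$; (ii) the constraint~\eqref{L=L}; and (iii) the Sato-Wilson equations~\eqref{Sato-Wilson}. Step (i) comes from specializing to $s=0$, $q'=q''$, where~\eqref{hbe-wave} becomes $\res_\Lambda W_1\Lambda^{-1}W_1^*\Lambda^{-r}= Q^{r+1}\res_\Lambda Q^{ms}\cdots$ for all $r$, forcing $W_1\Lambda^{-1}W_1^*=W_2(Q\Lambda)^{-1}W_2^*$ as difference operators; peeling off the (invertible) exponential factors in~\eqref{WWWW} yields $P_1\Lambda^{-1}P_1^*=P_2\Lambda^{-1}P_2^*$, and comparing leading terms via the triangular form~\eqref{PP} gives the inverse statement. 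Step (ii): with (i) in hand, the $s=1$, $q'=q''$ specialization gives $P_1\Lambda^{-k}P_1^{-1}=P_2(Q\Lambda)^mP_2^{-1}$, which is exactly~\eqref{L=L} and also pins down the form~\eqref{Laxop} of $L$. Step (iii) is the differentiation argument: differentiate~\eqref{hbe-wave} (equivalently~\eqref{hbe-oper}) with respect to $q_n^\beta$ at $q''=q'$, producing equations~\eqref{L58}--\eqref{L60}; the operator $\frac{\partial P_1}{\partial q_n^\beta}P_1^{-1}$ is a purely positive-degree difference operator while $\frac{\partial P_2}{\partial q_n^\beta}P_2^{-1}$ has only non-positive degrees (from~\eqref{PP}), and the extra terms assemble, using the definitions of $L^{1/k}$, $L^{1/m}$, $\log L$, into $\pm B_n^\beta$; projecting onto positive and non-positive parts of $\Lambda$ isolates~\eqref{Sato-Wilson}.

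The one genuinely delicate point — and the step I expect to be the main obstacle — is the bookkeeping that the ``extra terms'' produced when differentiating the exponential prefactors in~\eqref{WWWW} reassemble \emph{exactly} into the operators $B_n^\beta$ of~\eqref{Bs1}--\eqref{Bs3}, including getting the constants $(-\beta/k)_{-n-1}$, the harmonic-number shifts $c_n$, and the $\log Q$ terms right, and in particular that $B_n^k$ really is the logarithmic combination $\frac{2}{\epsilon n!}L^n(\log L-\cdots)$ with the stated coefficient. This requires carefully tracking how $\partial_{q_n^k}$ hits both the $\epsilon\partial_x$ term and the $c_n$ term in the exponent of $W_1$ (and $W_2$), conjugating $\Lambda^{-nk}$ and $(Q\Lambda)^{nm}$ by $P_1$, $P_2$ to powers of $L^{1/k}$, $L^{1/m}$, and recognizing $-kP_1\epsilon\partial_xP_1^{-1}$, $mP_2\epsilon\partial_xP_2^{-1}+m\log Q$ as $\log_-L$, $\log_+L$. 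Once this identification is made, one checks that the resulting Sato-Wilson equations together with~\eqref{L=L} and~\eqref{PP} are precisely the definition of dressing operators for the EBTH, closing the equivalence.

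\begin{remark}
Strictly speaking one should also verify a compatibility/consistency point in the converse direction: that the positive and non-positive projections of~\eqref{L58}--\eqref{L60} are not only necessary but, given~\eqref{L=L}, sufficient to recover the full operator identity~\eqref{hbe-oper} for all $s$, i.e.\ that no information beyond the Sato-Wilson equations is encoded in the higher-$s$ bilinear relations. This follows because, once $P_1,P_2$ solve Sato-Wilson with the constraint~\eqref{L=L}, the difference $W_1\Lambda^{-ks-1}W_1^*-QW_2(Q\Lambda)^{ms-1}W_2^*$ is independent of all the times $q_n^\beta$ (its $q$-derivatives vanish by the forward computation), hence equals its value at a reference point, which one arranges to be zero by the normalization~\eqref{PP}.
\end{remark}
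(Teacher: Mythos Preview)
You have the two directions reversed. Section~2 of the paper goes from the Hirota bilinear identity~\eqref{hbe-wave} (equivalently~\eqref{hbe-oper}) \emph{to} the Sato--Wilson equations~\eqref{Sato-Wilson} and the constraint~\eqref{L=L}; that is the direction ``bilinear $\Rightarrow$ dressing operators''. What you call the ``forward direction (dressing operators $\Rightarrow$ bilinear identity)'' was \emph{not} carried out in the previous section---it is precisely the new content that the proposition requires. Conversely, the detailed argument you give under ``converse'' (specialize to $s=0,1$ at $q'=q''$, then differentiate, then project) simply reproduces the derivation already done in Section~2. So the bulk of your write-up re-proves what is already established, while the direction that actually needs work is relegated to your final Remark.

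The sketch in your Remark is in the right spirit but is not yet a proof. You assert that the $q$-derivatives of $D_s:=W_1\Lambda^{-ks-1}W_1^*-QW_2(Q\Lambda)^{ms-1}W_2^*$ vanish ``by the forward computation'', but that forward computation is exactly what is missing. To make this work you must first establish, from the Sato--Wilson equations, that
\[
\frac{\partial W_1}{\partial q_n^\beta}\,W_1^{-1}=\frac{\partial W_2}{\partial q_n^\beta}\,W_2^{-1}
\qquad\text{for }(\beta,n)\neq(k,0),
\]
which is the key identity the paper isolates; it requires checking that the derivative of the exponential prefactors in~\eqref{WWWW} combines with $(B_n^\beta)_{>0}$ (resp.\ $-(B_n^\beta)_{\leq0}$) to give the full $B_n^\beta$ on both sides. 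You also need to handle the excluded time $q_0^k$ and justify the ``reference point'' evaluation. The paper organizes this differently: it proves the $s=0$ operator identity $W_1\Lambda^{-1}W_1^*=QW_2(Q\Lambda)^{-1}W_2^*$ directly from $P_i^*\Lambda=\Lambda P_i^{-1}$, then uses the displayed identity above and induction to show that all mixed partial derivatives of $W_1$ and $W_2$ (in the times with $(\beta,n)\neq(k,0)$) satisfy the same relation, and finally Taylor-expands to pass from $q''=q'$ to arbitrary $q',q''$; the general $s$ case then follows by left-multiplying by $L^s$. Your constancy-plus-reference-point idea could be made to work, but as written it presupposes the very computation it is meant to replace.
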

\begin{proof}
The proof is quite standard (see for example~\cite{UT84, Mil07, LHWC10}). We have already shown that the bilinear equations for the wave functions imply the Sato-Wilson equations. Let us sketch the proof of the converse.

Let $P_1$ and $P_2$ be dressing operators for EBTH. Define the wave operators $W_1$, $W_2$, $W^*_1$ and $W_2^*$ by formula~\eqref{WWWW} and recall that $P_i^* \Lambda = \Lambda P_i^{-1}$, $i=1,2$.

The first important observation is that the wave operators satisfy the same equations in $q_n^\beta$, but for the case $\beta=k$, $n=0$. It follows that
\begin{equation}
\frac{\partial W_1}{\partial q_n^\beta} W_1^{-1} = \frac{\partial W_2}{\partial q_n^\beta} W_2^{-1},
\end{equation}
for $(\beta,n)\not=(k,0)$.
It is also quite obvious that 
\begin{equation}
W_1 \Lambda^{-1} W_1^* = Q W_2 (Q\Lambda)^{-1} W_2^* .
\end{equation}
It is then a simple matter of induction to prove that
\begin{equation}
\left(\frac{\partial}{\partial q^{\alpha_1}_{n_1}} \dots \frac{\partial}{\partial q^{\alpha_\ell}_{n_\ell}} W_1\right) \Lambda^{-1} W_1^* 
= Q \left(\frac{\partial}{\partial q^{\alpha_1}_{n_1}} \dots \frac{\partial}{\partial q^{\alpha_\ell}_{n_\ell}} W_2 \right)(Q \Lambda)^{-1} W_2^*
\end{equation}
for any multi-index $(\alpha_1,n_1; \dots; \alpha_\ell,n_\ell)$, for all pairs of indexes, excluding $(\alpha_i,n_i)=(k,0)$. From this the bilinear equation for the wave operators~\eqref{hbe-oper} with $s=0$ simply follows, using a Taylor series expansion. Multiplying on the left by $L^s$, we obtain~\eqref{hbe-oper} for any $s\geq0$.

Inverting the argument used in section~\ref{s:difference}, it is clear that~\eqref{hbe-oper} is equivalent to~\eqref{hbe-wave}.
\end{proof}

From this proof it is clear that the Sato-Wilson equations are also equivalent to the bilinear equation for the wave operators~\eqref{hbe-oper}.

We now want to show the existence of a tau function, namely that, given dressing operators $P_1$ and $P_2$ of the EBTH, one can find a function $\tau$, depending on the times $q_n^\alpha$ and the dispersive parameter $\epsilon$ such that~(\ref{p-tau}, \ref{p-tau2}) hold. These equations can be written as
\begin{equation} \label{PPP} 
\cP_i(\zeta)=\frac{\tau(q-[\zeta^{-1}]_i)}{\tau(q)}, \quad i=1,2
\end{equation}
where the shift functions $[\zeta^{-1}]_1$ and $[\zeta^{-1}]_2$ can be read off from~(\ref{p-tau}, \ref{p-tau2}).

The main observation is that these shifts do not involve the ``logarithmic'' variables $q_n^k$ for $n>0$. (Note that in~\eqref{p-tau2} the coefficient $(n-1)_n$ in the term containing $\frac{\partial}{\partial q_n^k}$ is equal to $\delta_{n,0}$.) 
Therefore, the times $q_n^k$ for $n>0$ enter only as parameters in equations~\eqref{PPP}.

To prove~\eqref{PPP} we need to show that some compatibility conditions hold. Such compatibility equations will not involve the times $q_n^k$ explicitly. Since the only times that do not come from a reduction of the 2D Toda times are precisely the logarithmic ones, the proof of the compatibility equations can be performed exactly as in the 2D Toda case~\cite{UT84}.

Another consequence of the fact that we do not shift the ``logarithmic'' times $q_n^k$ ($n> 0$) in~\eqref{PPP}, is that the tau function has a much bigger arbitrarity than in the usual (non-extended) case.

\begin{proposition}
\label{propPtau}
Let $P_1$ and $P_2$ be dressing operators for the extended bigraded Toda hierarchy, then there exists a function $\tau$ such that~\eqref{PPP} holds. The function $\tau$ is uniquely determined up to right multiplication by  a non-vanishing function depending only on $q_n^k$, $n>0$.
\end{proposition}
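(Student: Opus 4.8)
The plan is to follow the classical Sato/Ueno-Takasaki strategy of reconstructing the tau function from the wave functions, adapted to the extended bigraded setting where the ``logarithmic'' times $q_n^k$, $n>0$, play a purely parametric role. First I would reformulate the claimed equations~\eqref{PPP} as a system of equations determining the logarithmic derivatives of $\tau$. Concretely, expanding $\cP_i(\zeta)=\tau(q-[\zeta^{-1}]_i)/\tau(q)$ in powers of $\zeta^{-1}$ and comparing with the explicit expressions~\eqref{p-tau} and~\eqref{p-tau2} for $\cP_i$ in terms of the dressing operators, one extracts, at each order in $\zeta^{-1}$, a formula for a first-order derivative $\partial \tau/\partial q_n^\beta$ (with $(\beta,n)\neq(k,>0)$) as $\tau$ times an explicit differential polynomial in the coefficients of $P_1$ and $P_2$. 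This is a closed first-order system $\partial_{q_n^\beta}\log\tau = H_n^\beta$ for a collection of functions $H_n^\beta$ built from the dressing operators, with the times $q_n^k$, $n>0$, entering only as parameters.

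Next I would verify the compatibility (Frobenius integrability) conditions $\partial_{q_m^\gamma} H_n^\beta = \partial_{q_n^\beta} H_m^\gamma$. As observed in the paragraph preceding the statement, these compatibility equations do not involve the logarithmic times $q_n^k$ ($n>0$) explicitly, so they reduce to purely ``2D-Toda-type'' identities among the 2D Toda flows and the flat (non-logarithmic) descendent flows. Hence they can be checked exactly as in the standard 2D Toda reconstruction of Ueno--Takasaki~\cite{UT84}: the Sato-Wilson equations~\eqref{Sato-Wilson}, together with the constraint~\eqref{L=L} and the relation $P_i^*\Lambda = \Lambda P_i^{-1}$, imply the residue/coefficient identities that are precisely the integrability conditions for the system $\partial_{q_n^\beta}\log\tau = H_n^\beta$. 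I would also note that the bilinear equation~\eqref{hbe-wave} for the wave functions, proved equivalent to the dressing equations in the previous proposition, gives an efficient packaging of all these identities at once. Given integrability on the simply connected domain of the time variables (treating $q_n^k$, $n>0$, as frozen parameters), there exists $\tau$ with $\partial_{q_n^\beta}\log\tau = H_n^\beta$, and one checks directly that such a $\tau$ satisfies~\eqref{PPP} by reversing the Taylor-expansion step.

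Finally I would address uniqueness. If $\tau$ and $\tilde\tau$ both satisfy~\eqref{PPP}, then $\tilde\tau/\tau$ is annihilated by all the vector fields $\partial/\partial q_n^\beta$ with $(\beta,n)\neq(k,>0)$; since the only remaining coordinates are the $q_n^k$ with $n>0$, the ratio is a function of those variables alone, and it must be non-vanishing since both $\tau$ and $\tilde\tau$ are (the $\cP_i$ are units in the relevant ring of power series). Conversely, multiplying any solution $\tau$ by an arbitrary non-vanishing function of the $q_n^k$, $n>0$, leaves the ratios in~\eqref{PPP} unchanged, so this is exactly the ambiguity. The main obstacle I anticipate is organizing the compatibility verification cleanly: one must handle three distinct families of times ($1\le\beta\le k-1$, $k+1\le\beta\le k+m$, and $\beta=k$ with $n=0$), and in particular confirm that the logarithmic operator $\log L$ entering $B_0^k$ does not spoil the 2D-Toda-type structure of the integrability conditions — this is where the special role of $q_0^k$ versus $q_n^k$ ($n>0$), already flagged in the text, has to be used carefully. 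Once that is in place, the rest is a standard (if lengthy) bookkeeping exercise, and I would refer to~\cite{UT84} for the parallel computations rather than reproducing them.
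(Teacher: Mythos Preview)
Your overall strategy --- freeze the logarithmic times $q_n^k$, $n>0$, as parameters, reduce the existence of $\tau$ to a compatibility condition that only involves the 2D-Toda-type times, and invoke~\cite{UT84} --- is exactly the paper's strategy, and your uniqueness argument is the same as the one implicit in the statement.

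The execution, however, differs from the paper, and your version has a soft spot. You propose to expand $\cP_i(\zeta)=\tau(q-[\zeta^{-1}]_i)/\tau(q)$ in $\zeta^{-1}$ and read off, order by order, formulas $\partial_{q_n^\beta}\log\tau=H_n^\beta$, then check the differential Frobenius conditions $\partial_{q_m^\gamma}H_n^\beta=\partial_{q_n^\beta}H_m^\gamma$. But the Taylor expansion of $\log\tau(q-[\zeta^{-1}]_i)-\log\tau(q)$ does \emph{not} give a single first derivative at each order: already at order $\zeta^{-2}$ one sees second derivatives and products of first derivatives. One can of course eliminate these iteratively using the lower-order relations, so your $H_n^\beta$ become differential polynomials in the coefficients of $P_i$ as you say, but then the cross-derivative check becomes a genuinely lengthy computation rather than ``standard bookkeeping.'' The paper avoids this entirely by working with the \emph{multiplicative} compatibility condition
\[
\cP_i(q-[\xi^{-1}]_j;\zeta)\,\cP_j(q;\xi)=\cP_j(q-[\zeta^{-1}]_i;\xi)\,\cP_i(q;\zeta),
\]
which is the finite (rather than infinitesimal) integrability condition for~\eqref{PPP}. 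This is obtained directly from the bilinear operator identity~\eqref{hbe-oper}: one inserts the generating parameters $t_j$, specialises $q''=q'-[\zeta^{-1}]_i-[\xi^{-1}]_j$ and $t_j$ appropriately, takes residues, and uses the resulting identity $\cP_i(q;\xi)\cP_i^*(q-[\xi^{-1}]_i;\xi)=1$ to eliminate the starred symbols. This is precisely the route of~\cite{UT84}, and it sidesteps any differential expansion. So your plan is viable but would be substantially heavier than the paper's; if you actually carry it out you will likely find yourself reinventing the functional compatibility condition anyway.
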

\begin{proof}

The compatibility conditions for the formulas~\eqref{PPP} are
\begin{equation} \label{compat}
\cP_i(q-[\xi^{-1}]_j;\zeta) \cP_j(q;\xi) = 
\cP_j(q-[\zeta^{-1}]_i;\xi) \cP_i(q;\zeta)
\end{equation}
for $i,j=1,2$.

One can obtain these equations from  (\ref{hbe-oper}) in a similar way as is done in \cite{UT84}. We refer to that paper for the details of a proof, here we will only give a sketch. We first rewrite (\ref{hbe-oper}) by introducing some arbitrary parameters $t_j$:
\begin{equation}
\label{hbe-oper2}
W_1(q')e^{\sum_{j=1}^\infty t_j\Lambda^{-kj}}\Lambda^{-1} W_1^*(q'')=
W_2(q')e^{\sum_{j=1}^\infty t_j(Q\Lambda)^{mj}}\Lambda^{-1} W_2^*(q'')
\end{equation}
Substitute $q''=q'-[\zeta^{-1}]_1-[\xi^{-1}]_1$ and $t_j=\frac{\zeta^{-jk}+\xi^{-jk}}{jk}$ in (\ref{hbe-oper2}), this gives equation (\ref{lemma1a}) of

\begin{lemma}
\label{lemma1}
Let $(1-z\Lambda)^{-1}=\sum_{n=0}^\infty \Lambda^nz^n$, the following identities hold:
\begin{align}
\label{lemma1a}
&P_1(q)(1-(\zeta\Lambda)^{-1})^{-1}(1-(\xi\Lambda)^{-1})^{-1} \Lambda^{-1} P_1^*(q-[\zeta^{-1}]_1-[\xi^{-1}]_1)=
\nn\\
&\qquad=P_2(q) \Lambda^{-1}P_2^*(q-[\zeta^{-1}]_1-[\xi^{-1}]_1),
\\
\label{lemma1b}
&P_1(q)(1-(\zeta\Lambda)^{-1})^{-1}\Lambda^{-1} P_1^*(q-[\zeta^{-1}]_1-[\xi^{-1}]_2)=
\nn\\
&\qquad =P_2(q)
\left(1-\frac{\Lambda Q}{\xi}\right)^{-1} \Lambda^{-1}
P_2^*(q-[\zeta^{-1}]_1-[\xi^{-1}]_2),
\\
\label{lemma1c}
&P_1(q) \Lambda^{-1} P_1^*(q-[\zeta^{-1}]_2-[\xi^{-1}]_2)=
\nn\\
&\qquad =P_2(q)
\left(1-\frac{\Lambda Q}{\zeta}\right)^{-1}\left(1-\frac{\Lambda Q}{\xi}\right)^{-1} \Lambda^{-1}
P_2^*(q-[\zeta^{-1}]_2-[\xi^{-1}]_2).
\end{align}
\end{lemma}

The other two formulas can be obtained in a similar way. Next, using
\[
(1-(\zeta\Lambda)^{-1})^{-1}(1-(\xi\Lambda)^{-1})^{-1} \Lambda^{-1} 
=\frac{\zeta\xi}{\xi-\zeta}
\left(
(1-(\zeta\Lambda)^{-1})^{-1}-(1-(\xi\Lambda)^{-1})^{-1}
\right)
\]
and  taking the residue in (\ref{lemma1a}) one obtains the first equation in
\begin{lemma}The following identities hold:
\label{lemma2}
\begin{align}
\label{lemma2a}
&\cP_1(x,q,\zeta){\cP_1^*}(x,q-[\zeta^{-1}]_1-[\xi^{-1}]_1,\zeta)=
\nn\\&=
\cP_1(x,q,\xi){\cP_1^*}(x,q-[\zeta^{-1}]_1-[\xi^{-1}]_1,\xi),
\\
\label{lemma2b}
&\cP_1(x,q,\zeta){\cP_1^*}(x-\epsilon,q-[\zeta^{-1}]_1-[\xi^{-1}]_2,\zeta)=
\nn\\&=
\cP_2(x,q,\xi){\cP_2^*}(x-\epsilon,q-[\zeta^{-1}]_1-[\xi^{-1}]_2,\xi),
\\
\label{lemma2c}
&\cP_2(x,q,\zeta){\cP_2^*}(x-2\epsilon,q-[\zeta^{-1}]_2-[\xi^{-1}]_2,\zeta)=
\nn\\&=
\cP_2(x,q,\xi){\cP_2^*}(x-2\epsilon,q-[\zeta^{-1}]_2-[\xi^{-1}]_2,\xi).
\end{align}
\end{lemma}

Setting $\zeta=\infty$ in (\ref{lemma2a}) and  (\ref{lemma2b}) gives
\begin{equation}
\label{Pinverse}
\cP_1(x,q,\xi){\cP_1^*}(x,q-[\xi^{-1}]_1,\xi)=1=\cP_2(x,q,\xi){\cP_2^*}(x-\epsilon,q-[\xi^{-1}]_2,\xi).
\end{equation}
Using this formula to eliminate ${\cP_1^*}$ and ${\cP_2^*}$ in the last lemma, we obtain the compatibility conditions~\eqref{compat}.
\end{proof}

Using (\ref{Pinverse}), one deduces from~\eqref{PPP} the following
\begin{corollary}
\begin{equation}
\label{Pstar in tau}
{\cP_1^*}(\zeta)=\frac{\tau(q+[\zeta^{-1}]_1)}{\tau(q)},\qquad
{\cP_2^*}'(x,q,\zeta)=\frac{\tau(x+\epsilon,q+[\zeta^{-1}]_2)}{\tau(x,q)}.
\end{equation}
\end{corollary}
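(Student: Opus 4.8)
The plan is simply to compose the two relations already established: equation~\eqref{PPP}, which writes each $\cP_i$ as a ratio of shifted tau functions, with equation~\eqref{Pinverse}, which identifies the multiplicative inverse of $\cP_i$ (evaluated at shifted arguments) with $\cP_i^*$. Since $[\zeta^{-1}]_1$ and $[\zeta^{-1}]_2$ involve only negative powers of $\zeta$, every substitution below is legitimate term by term in the $\zeta^{-1}$-expansion, so no convergence or well-definedness issue arises.

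For the first formula, read~\eqref{Pinverse} as $\cP_1^*\bigl(x,q-[\xi^{-1}]_1,\xi\bigr)=\cP_1(x,q,\xi)^{-1}$ and substitute $\cP_1(x,q,\xi)=\tau(q-[\xi^{-1}]_1)/\tau(q)$ from~\eqref{PPP}; this gives
\[ \cP_1^*\bigl(x,q-[\xi^{-1}]_1,\xi\bigr)=\frac{\tau(q)}{\tau(q-[\xi^{-1}]_1)}. \]
Replacing $q$ by $q+[\xi^{-1}]_1$ and renaming $\xi$ as $\zeta$ yields $\cP_1^*(\zeta)=\tau(q+[\zeta^{-1}]_1)/\tau(q)$, with no appearance of $x$ because the shift $[\zeta^{-1}]_1$ does not touch $q_0^k$ (see~\eqref{p-tau}).

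For the second formula I would run the identical argument on the other half of~\eqref{Pinverse}, namely $\cP_2(x,q,\xi)\,\cP_2^*(x-\epsilon,q-[\xi^{-1}]_2,\xi)=1$: solve for $\cP_2^*$, substitute $\cP_2$ from~\eqref{PPP}, and then shift $x\mapsto x+\epsilon$ together with $q\mapsto q+[\zeta^{-1}]_2$. The one genuinely delicate point --- and the step where a sign or an index is easiest to get wrong --- is the bookkeeping of the lattice variable: the shift $[\zeta^{-1}]_2$ acts nontrivially on $q_0^k$ (its coefficient there equals $\delta_{n,0}$, cf.\ the remark after~\eqref{PPP}), and this $q_0^k$-component is precisely what, after taking the prime $q_0^k\mapsto q_0^k+x$, becomes the $x\mapsto x+\epsilon$ shift in the statement, while the explicit $x-\epsilon$ sitting inside $\cP_2^*$ in~\eqref{Pinverse} must cancel against it. Once every shift of $q_0^k$ is consistently rewritten as the corresponding shift of $x$ and the prime is absorbed, one reads off $\cP_2^*{}'(x,q,\zeta)=\tau(x+\epsilon,q+[\zeta^{-1}]_2)/\tau(x,q)$; I expect this bookkeeping to be the only real obstacle, everything else being a direct substitution.
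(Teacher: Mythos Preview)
Your proposal is correct and follows exactly the route the paper takes: the corollary is stated immediately after~\eqref{Pinverse} with the one-line justification ``Using~\eqref{Pinverse}, one deduces from~\eqref{PPP} the following'', i.e., precisely the composition of~\eqref{Pinverse} with~\eqref{PPP} that you describe. Your remark about the bookkeeping of the $q_0^k$-shift inside $[\zeta^{-1}]_2$ versus the explicit $x-\epsilon$ in~\eqref{Pinverse} is the only point requiring care, and you have correctly identified it.
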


\begin{remark}
The construction in this section, which follows the method of~\cite{Mil07}, was done also by Li~et~al. in~\cite{LHWC10}, but, as they point out, it doesn't allow them to reproduce the form of the Milanov-Tseng Hirota equations.  This is mainly due to the fact that their choice of wave functions, and consequently, of vertex operator was not consistent with that of~\cite{MT08}.
\end{remark}

\section{On an alternative formulation of EBTH Hirota equations}

In this section we comment on a different formulation of the Hirota quadratic equations for the EBTH, which does not involve vertex operators with coefficients in the algebra of differential operators in $x$, and was first proposed by Takasaki in~\cite{Tak10}.

As recently observed by B.~Bakalov~\cite{Bak}, the operator 
\begin{equation}
N = e^{ - \sum_{n>0} \frac{\lambda^n}{n!} q_n^k \partial_{q_0^k} } \otimes e^{ - \sum_{n>0} \frac{\lambda^n}{n!} q_n^k \partial_{q_0^k} } 
\end{equation}
enjoys the same property as $\Gamma^{\delta\#}_\infty \otimes \Gamma^\delta_\infty$ of killing the monodromy term~\eqref{ext}. This points to the following alternative form of the Hirota quadratic equation:
\begin{equation} \label{hqe-alt}
N 
\left(\frac1k\sum_{a=1}^k  \lambda_a^{\frac{1-k}k}  \Gamma_\infty^a\otimes\Gamma_\infty^{-a} 
-\frac{Q}m \sum_{b=k+1}^{k+m}  \lambda_b^{-\frac{1+m}m} \Gamma_\infty^b\otimes\Gamma_\infty^{-b} \right) (\tau\otimes\tau) d\lambda .
\end{equation}
As before we say that this HQE is satisfied if the expansion at $\lambda \sim \infty$ does not contain negative powers in $\lambda$, i.e. if the $1$-form~\eqref{hqe-alt} is regular, for $(q''-q')_0^k = \epsilon r\in\epsilon\Z$.
Note that~\eqref{hqe-alt} does not depend so far on $x$, but such dependence can be easily added by shifting both $(q_0^k)'$ and $(q_0^k)''$ by $x$.

Let us first obtain an explicit form for this HQE. Spelling out the regularity condition above in terms of residues, similarly to what was done in section~2, we can easily see that the HQE~\eqref{hqe-alt} is satisfied iff the following residue formula holds:
{\Small
\begin{align}
&\res \zeta^{k(l-1)+r} \exp \frac1{\epsilon k} \sum_{n\geq0}  \left(\sum_{\alpha=1}^{k-1} \left( \frac\alpha{k} -1 \right)_{-n-1} \zeta^{\alpha+kn} (q''-q')_n^{k-\alpha}   +  \frac1{n!} \zeta^{nk} c_n (q'-q'')_n^k \right)\cdot \notag\\
&\quad\cdot \tau\left( q' - [\zeta^{-1}]_1, (q_0^k)'-\sum_{n>0} \frac{\zeta^{kn}}{n!} (q_n^k)' \right)  \cdot \tau\left( q'' + [\zeta^{-1}]_1, (q_0^k)''-\sum_{n>0} \frac{\zeta^{kn}}{n!} (q_n^k)'' \right) d\zeta = \\
&= \res \zeta^{m(l-1) -2 -r} Q^{r+1} 
\exp \frac1{\epsilon m} \sum_{n\geq0} \left( \sum_{\alpha=1}^m \left( \frac\alpha{m} -1 \right)_{-n-1} \zeta^{\alpha+nm} (q'-q'')_n^{\alpha+k} +  \frac{\zeta^{nm}}{n!} c_n (q''-q')_n^k  \right) \cdot \\
&\quad\cdot \tau\left( q' - [\zeta^{-1}]_2, (q_0^k)'-\sum_{n>0} \frac{\zeta^{mn}}{n!} (q_n^k)' \right) \cdot \tau\left( q'' + [\zeta^{-1}]_2, (q_0^k)''-\sum_{n>0} \frac{\zeta^{mn}}{n!} (q_n^k)'' \right) d\zeta
\end{align}
}
where $(q''-q')_0^k = \epsilon r$ with $r\in\Z$ and $l\geq1$. 

One can easily be convinced that in this form this equation is in principle equivalent to the Hirota equation proposed, for the case $k=m=1$, in~\cite{Tak10}.

It remains to show that~\eqref{hqe-alt} is indeed equivalent to~\eqref{hqe}.
Let us denote the $1$-form~\eqref{hqe-alt} by $N (\omega)$. The HQE~\eqref{hqe-alt} is equivalent to the regularity of $N(\omega)$ while the HQE~\eqref{hqe} is equivalent to the regularity of $(\Gamma^{\delta\#}_\infty \otimes \Gamma^\delta_\infty)(\omega)$, in both cases evaluated at $(q''-q')_0^k = \epsilon r\in\epsilon\Z$. One can easily check that the following identity holds
\begin{equation}
\Gamma^{\delta\#}_\infty \otimes \Gamma^\delta_\infty = ( e^{\sum_{n>0} \frac{\lambda^n}{n!} q_n^k \partial_x} e^{x \partial_{q_0^k} } \otimes  e^{x \partial_{q_0^k} } e^{-\sum_{n>0} \frac{\lambda^n}{n!} q_n^k \partial_x} )  N.
\end{equation}
It follows that the HQE~\eqref{hqe} is equivalent to the regularity of 
\begin{equation}
e^{\sum_{n>0} \frac{\lambda^n}{n!}(q_n^k)' \partial_x} N(\omega)|_x e^{-\sum_{n>0} \frac{\lambda^n}{n!}(q_n^k)'' \partial_x} .
\end{equation}
The subscript $x$ denotes that we have inserted the dependence on $x$ by shifting shifting both $(q_0^k)'$ and $(q_0^k)''$ by $x$. Since both left and right multiplication by any operator depending only on positive powers of $\lambda$ preserves the regularity of a $1$-form, we can conclude that~\eqref{hqe} and~\eqref{hqe-alt} are indeed equivalent.

\end{document}